\def\IR{{\rm I\!R}}
\def\IC{{\rm I\hspace{-1.2ex}C}}
\def\ni{n}
\def\no{q}
\def\np{m}
\def \v{\upsilon}
\def\P{\mathcal{P}}
\def\p{p}
\title{\LARGE \bf
Structure-preserving Model Reduction of Parametric Power Networks$^{*}$
\thanks{\textcolor{black}{*This work was supported in parts by National Science Foundation under Grant No. DMS-1923221.}}}
\author{Bita Safaee$^{1}$ and Serkan Gugercin$^{2}$
\thanks{$^{1}$B.~Safaee is with the Department of Mechanical Engineering, Virginia Tech, Blacksburg, VA 24061,
        {\tt\small  bsafaee@vt.edu}}%
\thanks{$^{2}$S.~Gugercin is with the Department of Mathematics
Virginia Tech, Blacksburg, VA 24061,
        {\tt\small gugercin@vt.edu}}%
}
\newtheorem{theorem}{Theorem}[section]
\newtheorem{remark}{Remark}[section]
\newtheorem{corollary}{Corollary}[section]
\newtheorem{proposition}{Proposition}[section]
\newcommand{\RN}[1]{%
  \textup{\uppercase\expandafter{\romannumeral#1}}%
}
\begin{document}

\maketitle
\thispagestyle{empty}
\pagestyle{empty}

\begin{abstract}
We develop a structure-preserving parametric model reduction approach for linearized swing equations where 
parametrization corresponds to variations in operating conditions. 
We employ a global basis approach to develop the parametric reduced model
in which we concatenate the local bases obtained via
 $\mathcal{H}_2$-based interpolatory model reduction.
The residue of the underlying dynamics corresponding to the simple pole at zero varies with the parameters. Therefore, to have  bounded $\mathcal{H}_2$ and $\mathcal{H}_\infty$ errors, the reduced model residue for the pole at zero should match the original one over the entire parameter domain. Our framework achieves this goal by enriching the global basis based on a residue analysis.  
The effectiveness of the proposed method is illustrated through two numerical examples. 

\end{abstract}

\section{INTRODUCTION}
Power networks  are naturally modeled as second-order dynamical systems \cite{Kundur94,nishikawa2015comparative,sauer1998power,ChengKawano2017}. 
In the case of large-scale networks,
monitoring, analysis and control of resulting second-order systems become exceedingly difficult due to unmanageable computational demands. To tackle this predicament, we 
apply model reduction in which the goal is to construct a lower dimensional model that preserves the physically meaningful second-order dynamics and provides a high-fidelity approximation of the input/input behaviour. There is a plethora of model reduction approaches for  second-order dynamical systems, see, e.g., \cite{BaiSu2005}, \cite{BONINFabbender2016},  \cite{BEATTIEGugercin2009}, \cite{SuCraig1991}, \cite{MeyerSrinivasan1996}, \cite{ReisStykel2008},
\cite{ChaGVV05},
for model reduction of general second-order systems, and 
see, e.g., \cite{Ishizaki2015}, \cite{ChengKawano2017}, \cite{ChengScherpen2016},
\cite{mlinaric2020thesis}, \cite{YuCheng2019}
with a focus on
network dynamics. 

In this paper, we focus on parametrically varying power networks where the parameter variations correspond to different operation conditions. 
This leads to the parametric model reduction (PMOR) framework \cite{SerkanSurvey,BCOW17,QuarteroniMN16,hesthaven2016certified}.
The goal of PMOR is to find a parametric reduced model that can approximate the original model with acceptable fidelity over a wide range of 
parameters. PMOR eliminates the need for performing a separate reduction at each parameter value (operating condition) and 
 therefore  plays an important rule in control, design, optimization and uncertainty quantification. 

To form our parametric reduced-order structure-preserving (second-order) power network model we employ a global basis approach where the model reduction basis is constructed by concatenation of local bases for selected parameter samples. We obtain the local bases 
using second-order interpolatory $\mathcal{H}_2$-optimal methods \cite{wyatt2012issues,TomljBeattieGugercin18}. Since the full-order dynamics has a pole at zero with a parametrically varying residue, the parametric reduced model needs to retain this residue in order to have bounded $\mathcal{H}_2$ and  $\mathcal{H}_{\infty}$ error norms for whole parameter domain. Based on a detailed residue analysis, we establish the subspace conditions on the model reduction basis to guarantee this property and explain the algorithmic implications.

{The remainder of this paper is organized as follows: Section \ref{sec:intro}  presents nonlinear model of the swing equations as well as its corresponding non-parametric and parametric second-order linear approximations. In Section \ref{sec:modred}, we describe the parametric reduction method via interpolatory model reduction bases. Section \ref{sec:residue} presents our main theoretical results 
for subspace conditions to guarantee parametric residue-matching together with computational details. Section \ref{sec:num} illustrates the feasibility of our approach via numerical examples followed by conclusions 
in Section \ref{sec:conc}.}
\section{Network swing model}   \label{sec:intro}
A power network can be represented by a connected graph $\mathcal{G} = (\mathcal{V},\mathcal{E})$ with buses as nodes $\mathcal{V} = \{ 1,\dots,n \} $ and transmission lines as edges $ \mathcal{E} \subseteq \mathcal{V} \times \mathcal{V}$. Generally, a bus can host different combinations of generators and loads,  or it may even be a simple junction node.
Assume that each bus hosts a generator. We can model the active power ${P_{ij}}$ flowing from bus (node) {$i$} to bus {$j$} along the transmission line {$(i,j) \in \mathcal{E}$} as 
\begin{equation} \label{P_nm}
P_{ij} = \frac{E_i E_j}{\chi_{ij}}\sin (\delta_i - \delta_j),
\end{equation}
where $\delta_i$ is the phase angle, {$E_i$} is the peak voltage magnitude, {and $\chi_{ij} > 0$  is the line reactance}. This model ignores the line resistances. The swing equation for a single generator {$i$} results from  Newton's second law and is given by
\begin{align}
 \label{single swing}
M_{i}\ddot{\delta_i}+D_{i}\dot{\delta_i} = P^{mech}_{i} - P^{elec}_{i}, \  i \in \{1,\dots,n\},
\end{align}
where $M_i > 0$ is the rotor moment
of inertia, $D_i > 0$ is a damping constant, and $P^{mech}_{i}$ and $P^{elec}_{i}$
are the input mechanical power and output electrical power for the $i^{th}$ generator, respectively. 
Combing (\ref{P_nm}) and (\ref{single swing}) leads to the swing equations of an electric power grid 
\cite{nishikawa2015comparative,sauer1998power,Bergen1981}
\begin{align} \label{swing equation}
 M_{i }\ddot{\delta_i}&+D_{i}\dot{\delta_i} + \sum_{j \in \mathcal{V}_i} \frac{E_i E_j}{\chi_{ij}}\sin (\delta_i - \delta_j)\\
 &= P^{mech}_{i} - P^{load}_{i} = P_i^{net} \ , \ \ \ \ \ \ \forall i \in \mathcal{V}, \notag
\end{align}
where the set $\mathcal{V}_i \in \mathcal{V}$ refers to those buses {connected to} bus $i$ in $\mathcal{G}$, $P^{load}$ corresponds to the portion of the electric power consumed at bus 
$i$ and $P_i^{net}$ is the net power input at bus 
$i$. 

Assuming small angle differences ($\delta_i - \delta_j \simeq 0)$ and unity voltage magnitudes ({$E_i= 1$}), we can rewrite (\ref{P_nm}) as
\begin{equation}
 P_{ij} \simeq b_{ij}(\delta_{i} - \delta_{j}),  
\end{equation}
where $b_{ij} = \frac{1}{\chi_{ij}}$ is the suseptance between the nodes $(i,j)\in \mathcal{E} $. 
{Define 
$\delta = [\delta_1,\delta_2,\ldots,\delta_n]^T \in {\rm I\!R}^{n}$.} 
Then, the original dynamics in
 (\ref{swing equation}) can be linearized as
\begin{equation} \label{linear model}
    \Sigma :=
        \begin{cases}
$$M\ddot{\delta}(t)+D\dot{\delta}(t)+ L\delta(t) = B u(t),$$\\
$$ y(t) = C\delta (t)$$ ,
    \end{cases}
\end{equation}
where
$M = \mathsf{diag}(M_1,M_2,\ldots,M_n)\in {\rm I\!R}^{n \times n}$ and $D = \mathsf{diag}(D_1,D_2,\ldots,D_n)\in {\rm I\!R}^{n \times n}$ are the diagonal matrices of inertia and damping coefficients, and {$L \in {\rm I\!R}^{n \times n}$ is the susceptance Laplacian matrix} ($L = L^T \geq 0$) whose {$(i,j)$th} entry is given by 
\begin{equation} \label{Lnm}
   [ L]_{i,j} := 
    \begin{cases}
    $$-b_{ij}$$ , & \text{if $(i,j) \in \mathcal{E}$,}\\
    $$ \sum_{(i,j) \in \mathcal{E}} b_{ij}$$ , & \text{if $ j = i$,}\\
    0, & \text{otherwise.}
    \end{cases}
\end{equation}
 {Moreover
 $u = [ P_1^{net}~\ldots~P_n^{net}]^T \in \IR^n$, 
 $B \in {\rm I\!R}^{n \times \ni}$ is the identity matrix}, and {$C \in {\rm I\!R}^{\no \times n}$ yields the output of the system}. 
{By defining the new state variable 
$x = [\delta^T~\dot{\delta}^T]^T \in {\rm I\!R}^{2n}$}, one can equivalently represent the second-order dynamic (\ref{linear model}) in its first-order form 
\begin{align} 
\dot{x} = \mathcal{A}x + \mathcal{B}u,~~~y(t) = \mathcal{C}
x   \label{firstordernp}
\end{align} with 
$\mathcal{A} = \begin{bmatrix}
  0 & I\\ -M^{-1}L & -M^{-1}D \end{bmatrix} \in 
 {\rm I\!R}^{(2n) \times (2n)}$, 
 $\mathcal{B} = \begin{bmatrix}
       0\\M^{-1}B
  \end{bmatrix}\in 
 {\rm I\!R}^{(2n) \times \ni}$,
 and $\mathcal{C} = \begin{bmatrix} 
        C & 0
  \end{bmatrix}\in 
 {\rm I\!R}^{\no \times (2n)}$, where $I \in {\rm I\!R}^{n \times n}$ is the identity matrix.
Due to the simple zero eigenvalue of $L$, 
 $\mathcal{A}$ has one eigenvalue at zero and $2n-1$ eigenvalues in the left-half plane. {Thus  (\ref{linear model}) is {a stable} dynamical system, not asymptotically stable \cite{ChengKawano2017}.}
\subsection{Linearized parametric model}
In practice, matrix $L$ is not constant due to  variations, for example, in  peak voltage magnitudes {$E_i$}.  {Therefore, to allow variations, we will view
$E_i$ as a parameter that can vary and write it simply as $\p_i$}. This leads to the parametric power network model that appears as
\begin{align} 
 M_{i }\ddot{\delta_i}(t;\p)+&D_{i}\dot{\delta_i}(t;\p) 
 + \sum_{j \in \nu_i} \frac{ \textcolor{black}{p_i} \textcolor{black}{ p_j}}{\chi_{ij}}\sin \notag (\delta_i(t;\p) - \delta_j(t;\p)) \\ 
& = {P^{net}_{i}}  \ \ ,\ \ \ \ \forall i \in \mathcal{V} \label{swing equation p}
\end{align}
with the corresponding linear model 
\begin{equation} \label{param swing}
\begin{cases}
$$ M\ddot{\delta}(t;\p)+D\dot{\delta}(t;p)+ L(\p)\delta(t;\p) = {B} u(t),$$\\
$$ y(t;\p) = {C}\delta (t;\p), $$
\end{cases}
\end{equation}
where  $\p = \begin{bmatrix}
 \p_1 & \p_2 & \dots & \p_n     
\end{bmatrix}^T \in {\Omega \subseteq \IR^n}$ is the parameter vector, 
the  matrix $L(\p)$ will now vary with  $\p$, and allows for variation in operating conditions. The parametric  matrix $L(\p)$ can be written as
\begin{equation} \label{L_p}
  L(\p) = \P L\P, 
\end{equation}
where {$\P=\mathsf{diag}(\p) = \mathsf{diag}(\p_1,\ldots,\p_n) \in {\rm I\!R}^{n \times n}$} is diagonal  and {$L$ is as defined in} (\ref{Lnm}). {Note that $p_i=1$ for $i=1,\ldots,n$ recovers the non-parametric problem. We will allow $p_i$'s vary around this nominal value, i.e., $p_i \in (1-\alpha,1+\alpha)$ where  $0<\alpha<1$; thus $\P$ stays invertible for every $p \in \Omega$. Choosing, e.g., $\alpha =0.15$, corresponds to allowing a $15\%$variation in peak voltage magnitudues.} 
\section{Structure-preserving parametric reduced models for linearized swing equations} \label{sec:modred}

 We seek to develop a reduction framework such that not only it preserves the structure, but also the parametric reduced model serves with acceptable accuracy as a surrogate model over diverse operating conditions. 
Since it is crucial that the reduced model  preserves the physically-meaningful second-order structure, instead of transferring the second-order dynamics to the first-order form, as in (\ref{firstordernp}), and applying model reduction there, we will directly reduce the second-order dynamics (\ref{param swing}). 
 In other words, our goal is to find a reduced parametric system 
\begin{align} 
\begin{array}{l}
 M_r\ddot{\delta_r}(t;\p)+D_r\dot{\delta_r}(t;\p)+ L_{{r}}(\p)\delta_r(t;\p) = B_r u(t) \\
 ~\phantom{\quad} y_r(t;\p) = C_r \delta_r (t;\p), 
 \end{array} 
 \label{param swing reduced}
\end{align}
where $M_r$, $L_r(\p)$, $D_r \in {\rm I\!R}^{r \times r}$,  $B \in {\rm I\!R}^{r \times \ni}$ and $C \in {\rm I\!R}^{\no \times r}$ with $r \ll n $ such that the $y_r(t;\p)\approx y(t;\p)$ for a wide range of inputs $u(t)$ over the parameter range of interest. 

Since $M$ and $D$ are symmetric positive definite, and $L(\p)$ is symmetric positive semi-definite, one should preserve these structures in the reduced  model. 
We  achieve this  using Galerkin projection: construct a model reduction basis $V \in {\rm I\!R}^{n \times r}$  and 
the reduced-order matrices in (\ref{param swing reduced}) using
\begin{align} \label{parametric reduced matrices}
& M_r = V^T M V, \ D_r = V^T D V, \ L_{{r}}(\p) = V^T L(\p) V, \\
& B_r = V^T B ,\ \mbox{and}\  C_r = C V.  \notag
\end{align}
Accuracy of the structure-preserving reduced model (\ref{param swing reduced}) with the form (\ref{parametric reduced matrices}) clearly depends on the choice of $V$. We describe this choice next.

\subsection{Interpolatory model reduction bases} \label{intmodred}
There are numerous ways to choose the model reduction basis $V$ for reducing parametric  dynamical systems; see, for example, \cite{SerkanSurvey,BCOW17,QuarteroniMN16,AntBG20,hesthaven2016certified} and the references therein. For the parametric structured second-order dynamical system  (\ref{param swing}), we will employ the structure-preserving parametric interpolatory model reduction framework from \cite{Ant2010imr}, which extended the interpolatory model reduction framework for parametric systems \cite{BaurBeattieBennerGugercin2011} to the structured setting. For recent extensions of structured interpolatory model reduction to special classes of nonlinear systems, see \cite{morBenGW20a,morBenGW20b}. 

Transfer functions of the full-order parametric model (\ref{param swing}) and reduced one  (\ref{param swing reduced}) are, respectively, given by
\begin{align}
 H(s,\p) &= C (s^2 M + s D + L(\p))^{-1} B,\quad\mbox{and} \\
 H_r(s,\p) &= C_r (s^2 M_r + s D_r + L_{{r}}(\p))^{-1} B_r. 
\end{align}
Note that both  $H(s,\p)$ and $ H_r(s,\p)$ are {$\no \times \ni$} matrix-valued rational functions in $s$.  The goal, in  parametric interpolatory model reduction, is to choose $V$ such that  $H_r(s,\p)$ interpolates $ H(s,\p)$ at selected  points in the frequency $s$ and parameter $\p$. Since $H(s)$ is matrix-valued, one enforces interpolation only along the selected directions: Let $\p^{(i)}$ be a parameter point of interest. And let $\{\sigma_1^{(i)},\ldots,\sigma_{r_i}^{(i)}\} \in \IC$ be the frequency interpolation points with the corresponding tangent directions  $\{b_1^{(i)},\ldots,b_{r_i}^{(i)}\} \in \IC^\ni$ for the parameter sample $\p^{(i)}$. 
Assume we have $\np$ parameter samples $\{\p^{(1)},\ldots,\p^{(\np)}\}$. Then, the goal is to construct $V$ such that
\begin{align} \label{intcond}
 H(\sigma_j^{(i)},\p^{(i)})b_j^{(i)} = H_r(\sigma_j^{(i)},\p^{(i)})b_j^{(i)}
\end{align}
for $j=1,2,\ldots,r_i$ and $i=1,2,\ldots, \np$. 
 
 Define {$\mathcal{K}(s,\p) = s^2 M + s D + L(\p)$}. For $i=1,2,\ldots,\np$, construct  the \emph{local} interpolation basis $V^{(i)}  \in
 \IC^{n\times r_i}$
 corresponding to the parameter sample $\p^{(i)}$ using
 $$
 V^{(i)}= [\mathcal{K}(\sigma_1^{(i)},\p^{(i)})^{-1}Bb_1^{(i)},\ldots,\mathcal{K}(\sigma_{r_i}^{(i)},\p^{(i)})^{-1}Bb_{r_i}^{(i)}] 
 $$
and concatenate the local bases to construct the \emph{global} basis:
\begin{align} \label{parametric reduction bases}
& V  = \mathsf{orth}\left(\begin{bmatrix}
     V^{(1)} & V^{(2)} & \dots V^{(\np)}
\end{bmatrix} \right) \in {\rm I\!R}^{n \times r}, 
\end{align}
where ``$\mathsf{orth}$" refers to an orthogonal basis so that $V^T V = I_r.$ Realness of $V$ is guaranteed by choosing the interpolation points and tangent directions in conjugate pairs. Then, the reduced model (\ref{param swing reduced}) obtained as in (\ref{parametric reduced matrices}) using $V$ from 
(\ref{parametric reduction bases}) satisfies the interpolation conditions (\ref{intcond}); see \cite{Ant2010imr,AntBG20}.

Quality of the reduced model will depend on  the choice of interpolation points and tangent directions. In this paper, we choose them, and thus  the local bases $V^{(i)}$, using interpolatory optimal $\mathcal{H}_2$ model reduction. In other words, for every $\p^{(i)}$, we construct the local basis $V^{(i)}$ to minimize/reduce the $\mathcal{H}_2$-distance 
\begin{align}
\| H(\cdot,\p^{(i)}) - H_r(\cdot,\p^{(i)}) \| _{\mathcal{H}_2} &= \\
\Big(\frac{1}{2\pi}\int_{-\infty}^{\infty}   \|H(\imath\omega,\p^{(i)})&-H_r(\imath\omega,\p^{(i)})\|_F^2 d\omega\Big)^{\frac{1}{2}}, \nonumber
\end{align}
where $\imath^2 = -1$ and $\| \cdot \|_F$ denotes the Frobenius  norm. Optimal $\mathcal{H}_2$ model reduction is a heavily studied topic, In the case of unstructured linear dynamical systems, i.e.,
$H_r(s) = C_r(s I_r - A_r)^{-1}B_r$, the optimal reduced model in the $\mathcal{H}_2$-norm is a bitangential Hermite interpolant to $H(s)$ at the mirror images of the reduced poles \cite{gugercin2008hmr,Ant2010imr}. 
The Iterative Rational Krylov Algorithm (IRKA) \cite{gugercin2008hmr} and it variants, e.g., \cite{BeaG12,HokMag2018,xu2010optimal}, have been successfully applied  
in this setting to construct optimal interpolation points and directions. Since we require the reduced-model to have the second-order form, we employ the structured version of IRKA, namely  the Second Order IRKA (SOR-IRKA) \cite{wyatt2012issues,TomljBeattieGugercin18} to construct the local bases $V^{(i)}$. SOR-IRKA
produces a reduced-model that satisfies only a subset of optimal interpolation conditions at the cost of preserving structure. Since the underlying system has a pole at zero in our case, we will modify SOR-IRKA further. This will be explained in detail in Section \ref{sec:alg}. For  other  work on $\mathcal{H}_2$-based model reduction of second-order systems, see, e.g., \cite{beattie2014h2,mlinaric2020thesis,YuCheng2019}.

\begin{remark} 
As opposed to developing locally optimal $\mathcal{H}_2$ model reduction bases $V^{(i)}$ and concatenating them to construct the global basis $V$, 
following \cite{BaurBeattieBennerGugercin2011} one could 
introduce a composite error measure ($\mathcal{L}_2$ error in the
parameter space  and $\mathcal{H}_2$ error in the frequency domain).
Then, one can try to construct $V$
directly to minimize this composite measure. We refer the reader to \cite{BaurBeattieBennerGugercin2011} and more recent works \cite{morHunMS18,grimm2018parametric}  
in this direction
for the unstructured setting. 
\end{remark}
\section{Matching the parametric residue corresponding to the pole at zero}  \label{sec:residue}
Since $L(\p) = \P L \P$ and $L \mathbf{1}= 0$ where $\mathbf{1} \in {\rm I\!R}^{n \times 1}$ is the vector of ones, we obtain
$ L(\p)\P^{-1} \mathbf{1} = \P L \mathbf{1}= 0$.
 Therefore, for every $\p \in \Omega$, $L(\p)$ has a simple zero eigenvalue
 with the eigenvector $\v = \P^{-1} \mathbf{1}$,
and consequently $H(s,\p)$ has a simple pole at zero for every $\p$. This means that $H(s,\p)$ is not
an $\mathcal{H}_2$-function. However, we can still perform
an $\mathcal{H}_2$-based model reduction 
on $H(s,\p)$ as long as we guarantee that the error system, i.e., 
$H(s,\p)-H_r(s,\p)$, stays an $\mathcal{H}_2$-function for every $p$.
{This issue has been studied in the \emph{non-parametric} case.  \cite{ChengKawano2017} achieves a bounded $\mathcal{H}_2$ error norm in model reduction of second order networks where the Galerkin projection is obtained via clustering  techniques. In a more recent work, \cite{YuCheng2019} splits a non-parametric second order network with proportional damping into an asymptotically stable system and an average subsystem containing the zero eigenvalue. Then, the asymptotically stable system is reduced via interpolatory techniques and then re-combined with the average system leads to a reduced model with bounded  (and small) $\mathcal{H}_2$ error. We also refer the reader to, e.g., \cite{jongsma2018model,mlinaric2015efficient,monshizadeh2014projection,mlinaric2020thesis}
for the first-order dynamics case.}

{In reducing the \emph{parametric} second-order model (\ref{param swing}), we need to enforce that $H_r(s,p)$ retains the zero eigenvalue and its \emph{parametric} residue \emph{for every} $\p \in \Omega$ so that the error stays bounded over the whole domain. Next, we establish the subspace conditions on the model reduction basis $V$ to achieve this goal.
}

\subsection{Subspace conditions for matching the parametric residue}
\label{sec:subspace}
For a given a parameter, the next result establishes the conditions on $V$ to match the residue at zero.
{\begin{theorem} \label{Residue}
Given the parametric full-order model 
 (\ref{param swing}),
 let the parametric reduced model (\ref{param swing reduced}) be obtained as in (\ref{parametric reduced matrices}). 
 Let $\hat{\p} \in \Omega$ be a parameter of interest. Define
 $\hat{\P} = \mathsf{diag}(\hat{\p})$
 and 
$ \hat\v = \hat{\P}^{-1}\mathbf{1} $.
 Then for $\hat{\p} \in \Omega$, the reduced model $H_r(s,\hat\p)$ retains the simple pole of $H(s,\hat \p)$ at zero and its corresponding parameter-dependent residue if $\hat\v \in \mathsf{span}(V)$.
\end{theorem}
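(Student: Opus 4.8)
The plan is to derive, for a fixed admissible parameter $\hat\p$, an explicit Laurent expansion of $H(s,\hat\p)$ about $s=0$ that exposes both the order of the pole and its residue, and then to check that the Galerkin-projected data reproduce that residue whenever $\hat\v\in\mathsf{span}(V)$. I would fix $\hat\p\in\Omega$, set $\hat\P=\mathsf{diag}(\hat\p)$, $\hat\v=\hat\P^{-1}\mathbf{1}$, and recall from the opening of this section that $L(\hat\p)\hat\v=0$ with $L(\hat\p)=L(\hat\p)^{T}\geq 0$ having a \emph{simple} zero eigenvalue; hence $\hat\v$ spans both the left and the right kernel of $L(\hat\p)$, and $L(\hat\p)$ is invertible on $\hat\v^{\perp}$. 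Letting $\Pi=\hat\v\hat\v^{T}/(\hat\v^{T}\hat\v)$ be the orthogonal projector onto $\mathsf{span}(\hat\v)$, I would split a solution $x(s)$ of $\mathcal{K}(s,\hat\p)\,x(s)=B\,b$ as $x(s)=\alpha(s)\hat\v+z(s)$ with $z(s)\in\hat\v^{\perp}$. Multiplying the equation by $\Pi$ annihilates the $L(\hat\p)$-term and yields $s\big(\hat\v^{T}D\hat\v\,\alpha(s)+\hat\v^{T}Dz(s)\big)+O(s^{2})=\hat\v^{T}B\,b$, so $\alpha(s)=\dfrac{\hat\v^{T}B\,b}{s\,\hat\v^{T}D\hat\v}+O(1)$ (the denominator is nonzero since $D$ is positive definite); multiplying by $I-\Pi$ and using invertibility of $L(\hat\p)$ on $\hat\v^{\perp}$ shows that $z(s)$ stays bounded as $s\to 0$. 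Consequently $\mathcal{K}(s,\hat\p)^{-1}B$, and hence $H(s,\hat\p)=C\mathcal{K}(s,\hat\p)^{-1}B$, has a simple pole at $s=0$ with residue
\begin{equation*}
  R_{0}(\hat\p)\;=\;\lim_{s\to 0}\,s\,H(s,\hat\p)\;=\;\frac{(C\hat\v)(\hat\v^{T}B)}{\hat\v^{T}D\hat\v}.
\end{equation*}

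Next, assuming $\hat\v\in\mathsf{span}(V)$, I would write $\hat\v=V\hat\v_{r}$ with $\hat\v_{r}=V^{T}\hat\v\in\IR^{r}$ (using $V^{T}V=I_{r}$). The projection relations (\ref{parametric reduced matrices}) give $L_{r}(\hat\p)\hat\v_{r}=V^{T}L(\hat\p)V\hat\v_{r}=V^{T}L(\hat\p)\hat\v=0$, so $\hat\v_{r}$ lies in the kernel of the symmetric positive semidefinite matrix $L_{r}(\hat\p)$; conversely, $L_{r}(\hat\p)w=0$ implies $(Vw)^{T}L(\hat\p)(Vw)=0$, hence $L(\hat\p)Vw=0$, so $Vw\in\mathsf{span}(\hat\v)$ and $w\in\mathsf{span}(\hat\v_{r})$. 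Thus the kernel of $L_{r}(\hat\p)$ is exactly one-dimensional, spanned by $\hat\v_{r}$. Since $D_{r}=V^{T}DV$ is positive definite, the reduced pencil $s^{2}M_{r}+sD_{r}+L_{r}(\hat\p)$ has the same structure as $\mathcal{K}(s,\hat\p)$ (with $\hat\v_{r}$ playing the role of $\hat\v$), so the Laurent computation above applies verbatim: $H_{r}(s,\hat\p)$ has a simple pole at $s=0$ with residue $(C_{r}\hat\v_{r})(\hat\v_{r}^{T}B_{r})/(\hat\v_{r}^{T}D_{r}\hat\v_{r})$. Substituting $C_{r}\hat\v_{r}=CV\hat\v_{r}=C\hat\v$, $\hat\v_{r}^{T}B_{r}=(V\hat\v_{r})^{T}B=\hat\v^{T}B$ and $\hat\v_{r}^{T}D_{r}\hat\v_{r}=(V\hat\v_{r})^{T}D(V\hat\v_{r})=\hat\v^{T}D\hat\v$ then shows that the reduced residue equals $R_{0}(\hat\p)$, which is the claim.

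I expect the crux to be the first step --- proving carefully that the pole at $s=0$ is \emph{simple} and isolating the exact residue constant --- since it requires controlling the coupling between the $\mathsf{span}(\hat\v)$ and $\hat\v^{\perp}$ components of the resolvent as $s\to 0$ (equivalently, showing that $\det\mathcal{K}(s,\hat\p)$ has a simple root at $s=0$ with leading coefficient proportional to $\hat\v^{T}D\hat\v$). Once that block decomposition is in place, everything transfers to the reduced model by bookkeeping, because $D_{r}$ inherits positive definiteness and $L_{r}(\hat\p)$ inherits the one-dimensional kernel.
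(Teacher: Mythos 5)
Your proposal is correct, and it reaches the same residue formula $\phi_0=(C\hat\v)(\hat\v^{T}B)/(\hat\v^{T}D\hat\v)$ as the paper, but by a genuinely different route. The paper lifts the dynamics to first-order form, takes the Jordan decomposition of $\mathcal{A}(\hat\p)$, exhibits the right and left eigenvectors $q_1=[\hat\v^{T}\ 0]^{T}$ and $\tilde q_1=\alpha_D^{-1}[(D\hat\v)^{T}\ (M\hat\v)^{T}]^{T}$ for the zero eigenvalue, and reads the residue off the spectral expansion $H=\phi_0/s+\mathcal{C}Q_2(sI-\bar\Lambda)^{-1}\tilde Q_2\mathcal{B}$; the identical computation on the projected first-order system then gives $\phi_{0_r}$, and the subspace condition forces the two to coincide. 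You instead stay entirely in second-order form and extract the Laurent expansion of $\mathcal{K}(s,\hat\p)^{-1}$ directly, splitting the solution along $\mathsf{span}(\hat\v)\oplus\hat\v^{\perp}$ and using that $L(\hat\p)$ is invertible on $\hat\v^{\perp}$. That buys you two things: the argument transfers to the reduced model verbatim once you know $\ker L_r(\hat\p)=\mathsf{span}(V^{T}\hat\v)$ (and your proof of that one-dimensionality, via $w^{T}L_r(\hat\p)w=0\Rightarrow L(\hat\p)Vw=0$, is actually cleaner than the paper's partitioning $V=[V_1\ \hat\v]$), and it is arguably more in the spirit of structure preservation since no first-order realization is needed. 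What it costs is that the simplicity of the pole and the boundedness of the $\hat\v^{\perp}$-component must be justified with some care --- your leading-order balance $s\,\hat\v^{T}D\hat\v\,\alpha(s)\approx\hat\v^{T}Bb$ is circular as literally written (the discarded $O(s^{2})$ terms involve the unknowns), and a fully rigorous version would set up the $2\times2$ block (Schur-complement) system in $(\alpha,z)$ and show its determinant has a simple root at $s=0$ with derivative proportional to $\hat\v^{T}D\hat\v>0$; you flag exactly this as the crux, and it does go through. The paper's Jordan-form route sidesteps this by getting simplicity of the pole from the simplicity of the zero eigenvalue of $\mathcal{A}(\hat\p)$, and it additionally hands over the splitting $H=\phi_0/s+H_a$ used immediately afterwards in Corollary~\ref{cor:error}.
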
}
\begin{proof}
First, we show that $L_r(\hat \p)$ has a simple zero eigenvalue. Using
$\hat\v \in \mathsf{span}(V)$,
write ${V}$ as
  ${V} =  \begin{bmatrix} 
          V_1 & \hat\v
    \end{bmatrix}$ 
where ${V}_1 \in {\rm I\!R}^{n \times ({r-1})} $ and 
$\hat\v \notin \mathsf{span}({V}_1)$. Then,  using the fact $L(\hat\p)\hat\v = 0$, we obtain
\begin{align} \label{Lrp_hat}
  L_r(\hat\p) = {V}^T L(\hat\p) {V}  = 
     \begin{bmatrix}
           V_1^TL(\hat\p)V_1 & 0 \\ 
           0 & 0
     \end{bmatrix}.
          \end{align}
Since $\hat\v \notin span({V}_1)$, $L_r(\hat\p)$ 
has only one simple zero eigenvalue. Moreover, since $M$ and $D$ are positive definite and  model reduction is performed via a Galerkin projection as in (\ref{parametric reduced matrices}), all the other poles
of $H_r(s,\hat\p)$ have negative real parts except for this simple pole at zero.

Now we need to show that the parametrically varying residues of $H(s,\hat\p)$ and $H_r(s,\hat\p)$ corresponding to 
the pole at zero match. To find the residue of $H(s,\hat\p)$, we follow an analysis inspired by \cite{ChengKawano2017}. Transform the second-order dynamic (\ref{param swing}) to its equivalent first-order form 
\begin{align}
    \dot{x}(t;\p) = \mathcal{A}(p)x(t;\p) + \mathcal{B}u(t),~~~~
   y(t;\p) = \mathcal{C}x(t;\p), \notag
\end{align}
\vspace*{-3ex}
\begin{align}  \notag
  & \mbox{where}~\mathcal{A}(\p) = \begin{bmatrix}
  0 & I\\ -M^{-1}L(\p) & -M^{-1}D \end{bmatrix},\ 
  \mathcal{B} = \begin{bmatrix}
       0\\M^{-1}B
  \end{bmatrix}, \\
  & \mbox{and}~~~ \mathcal{C} = \begin{bmatrix} 
        C & 0
  \end{bmatrix}. \label{first order parametric matrices}
\end{align}
Let $\mathcal{A}(\p) $ have {the Jordan decomposition}
\begin{align} \label{Ajordan}
  \mathcal{A}(\p) = Q \Lambda Q^{-1} = 
  \begin{bmatrix}
      q_1 & {Q_2}
  \end{bmatrix}
  \begin{bmatrix}
  0 & \\  & \bar{\Lambda} \end{bmatrix} 
    \begin{bmatrix}
      \Tilde{q}_1^T \\ \tilde{Q}_2^T
  \end{bmatrix},
\end{align}
where the Jordan block $\bar{\Lambda} \in \IC^{(2n-1)\times (2n-1)}$ contains the eigenvalues with negative real parts, and $q_1$ $\in {\rm I\!R}^{2n}$ and $\tilde{q}_1$ $\in {\rm I\!R}^{2n}$ are, respectively, the right and left eigenvectors corresponding to zero eigenvalue such that
\begin{align} \label{q1}
    \mathcal{A}^{T}(\p) \tilde{q}_1 = 0,\ \mathcal{A}(\p) q_1 = 0, \ {\tilde{q}_1^T q_1 = 1.}
\end{align}
We note that this decomposition is parameter dependent but to simplify the notation, we write, e.g., $Q$ instead of $Q(\p)$.
At $\p=\hat\p$,
 using $L(\hat\p) \hat \v = 0$, and (\ref{Ajordan}) and (\ref{q1}), we obtain
\begin{align} \label{u_1 v_1}
q_1 = 
    \begin{bmatrix}
        \mathbf{\hat\v} \\ 0
    \end{bmatrix} ~~\mbox{and}~~ \tilde{q}_1= \frac{1}{\alpha_D}
    \begin{bmatrix}
        \hat D \hat \v \\  M \hat \v
    \end{bmatrix},
\end{align}
where $\alpha_D = \hat\v^T D \hat\v$. 
Using (\ref{Ajordan}), we write
 \begin{align} 
 H(s,\hat\p) &= \mathcal{C} (s I - \mathcal{A}(\p))^{-1} \mathcal{B} = \mathcal{C}Q (sI - \Lambda)^{-1}Q^{-1}\mathcal{B} \notag \\
 & = \frac{(\mathcal{C}q_1)(\tilde{q}_1^T\mathcal{B})}{s} +
 \mathcal{C}Q_2 (s I - \bar{\Lambda})^{-1} \tilde{Q}_2\mathcal{B}. \label{H_Decomposition}
     \end{align}
Thus,
$\phi_0 = (\mathcal{C}q_1)(\tilde{q}_1^T\mathcal{B})$ is the residue of $H(s,\hat\p)$ for the pole at zero. Then, substituting $q_1$ and $\tilde{q}_1$
from (\ref{u_1 v_1}), and  $\mathcal{C}$ and $\mathcal{B}$ from (\ref{first order parametric matrices})  into $\phi_0 = (\mathcal{C}q_1)(\tilde{q}_1^T\mathcal{B})$ yields
\begin{align}
    \phi_0 
    = \mathcal{C} \alpha_D^{-1} \begin{bmatrix} 
           \hat\v \hat\v^T D & \hat\v \hat\v^T M \\ 
          0 & 0
    \end{bmatrix} \mathcal{B} 
     = \alpha_D^{-1} C \hat\v \hat\v^T B.  \label{phi0}
\end{align}
Similarly, the residue of the reduced system $H_r(s,\hat \p)$  corresponding to the pole at zero is obtained as
\begin{align} \label{phi0red}
    \phi_{0_{r}} = \alpha_{D_{r}}^{-1} C {V} {V}^{T} \hat\v \hat\v^{T} {V} {V}^{T} B,
\end{align}
where $\alpha_{D_{r}}  = \hat\v^{T} {V}  D_r {V}^{T} \hat\v$.

Since ${V} {V}^{T}$ is an orthogonal projector, if $\hat\v \in span({V}) $, we have $V {V}^{T}\hat\v = \hat\v$,
\begin{align}
   \alpha_{D_{r}}  = \hat\v^{T} {V}  D_r {V}^{T} \hat\v=\hat\v^{T} {V} {V}^{T} D {V} {V}^{T} \hat\v
   = \hat\v^{T} D  \hat\v = \alpha_D, \notag
\end{align}
and thus
     $\phi_{0_{r}} = \alpha_{D_{r}}^{-1} C {V} {V}^{T} \hat\v \hat\v^{T} {V} {V}^{T} B = \phi_0$.

\end{proof}
Theorem \ref{Residue} establishes that if $\hat \v  = \hat{\P}^{-1} \mathbf{1} \in \mathsf{span}(V)$, for that parameter value $\hat{\p}$, the residues of $H(s,\hat{\p})$ and $H_r(s,\hat{\p})$ match {for the pole at $s=0$}. This means that 
\begin{align*} \notag 
    & H(s, \hat{\p})-H_r(s, \hat{\p})  \\
    & = \mathcal{C} (s I - \mathcal{A}(\hat{\p}))^{-1} \mathcal{B} - \mathcal{C}_r (s I - \mathcal{A}_r(\hat{\p}))^{-1} \mathcal{B}_r \notag \\
    &= \frac{\phi_0}{s}+ H_a(s,\hat{\p}) -\left( \frac{\phi_{r_{0}}}{s}+ H_{a_{r}}(s,\hat{\p}) \right)\\
    &= H_a(s,\hat{\p})  -H_{a_{r}}(s,\hat{\p}),\notag
\end{align*}
where $H_a(s,\hat{\p}) = \mathcal{C}Q_2 (s I - \bar{\Lambda})^{-1} \tilde{Q}_2\mathcal{B}  $ as in (\ref{H_Decomposition}) and $H_{a_{r}}(s,\hat{\p}) =  \mathcal{C}_r Q_{2_{r}} (s I - \bar{\Lambda}_r)^{-1} \tilde{Q}_{2_{r}}\mathcal{B}_r$ are asymptotically stable. Therefore, the error system is asymptotically stable at $\hat{\p}$. We write this result as a corollary.
\begin{corollary} \label{cor:error}
Assume the set-up of Theorem \ref{Residue}. Then,
the error system  $H(s,\hat{\p})-H_r(s,\hat{\p})$ is asymptotically stable, 
and  has bounded $\mathcal{H}_2$ and $\mathcal{H}_\infty$ norms.
\end{corollary}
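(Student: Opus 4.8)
The plan is to reuse the partial-fraction decomposition already displayed after Theorem~\ref{Residue} and then appeal to the elementary fact that an asymptotically stable, strictly proper rational transfer function belongs to $\mathcal{H}_2 \cap \mathcal{H}_\infty$.

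First I would split both transfer functions at the parameter value $\hat\p$ into a term supported at the origin plus an asymptotically stable remainder. For the full model this is exactly (\ref{H_Decomposition}), $H(s,\hat\p) = \phi_0/s + H_a(s,\hat\p)$, where $H_a(s,\hat\p) = \mathcal{C}Q_2(sI-\bar\Lambda)^{-1}\tilde Q_2\mathcal{B}$ has all of its poles (the entries of $\bar\Lambda$) in the open left half-plane. For the reduced model I would invoke the first part of the proof of Theorem~\ref{Residue}: under $\hat\v \in \mathsf{span}(V)$, $L_r(\hat\p)$ has exactly one simple zero eigenvalue, and since $M,D$ are positive definite and the reduction is a Galerkin projection, every other pole of $H_r(s,\hat\p)$ lies strictly in the left half-plane. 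Hence $H_r$ admits an analogous split $H_r(s,\hat\p) = \phi_{0_r}/s + H_{a_r}(s,\hat\p)$ with $H_{a_r}$ asymptotically stable.

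The second step is the cancellation. By Theorem~\ref{Residue} the residues agree, $\phi_0 = \phi_{0_r}$, so
\begin{align*}
H(s,\hat\p) - H_r(s,\hat\p) &= \left(\frac{\phi_0}{s} - \frac{\phi_{0_r}}{s}\right) + H_a(s,\hat\p) - H_{a_r}(s,\hat\p)\\
&= H_a(s,\hat\p) - H_{a_r}(s,\hat\p).
\end{align*}
The right-hand side is a difference of two asymptotically stable rational matrix functions; its poles form a subset of the union of two strictly stable pole sets, so the error system has no pole on the imaginary axis and is asymptotically stable. Since the realizations in (\ref{first order parametric matrices}) are strictly proper and subtracting $\phi_0/s$ preserves strict properness, $H_a - H_{a_r}$ is strictly proper as well.

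Finally I would close with the standard analytic observation: a rational matrix function analytic on the closed right half-plane and strictly proper is continuous and bounded on $\imath\IR$, hence has finite $\mathcal{H}_\infty$ norm, and its Frobenius norm along $\imath\IR$ is square-integrable (the integrand is bounded near $\omega=0$ and decays like $\mathcal{O}(\omega^{-2})$ as $|\omega|\to\infty$), hence it has finite $\mathcal{H}_2$ norm. I do not anticipate a real obstacle: everything structural has already been carried out inside the proof of Theorem~\ref{Residue}, and the hypothesis $\hat\v \in \mathsf{span}(V)$ is needed solely to guarantee $\phi_0 = \phi_{0_r}$ — without it the surviving term $(\phi_0-\phi_{0_r})/s$ would have a pole on the imaginary axis and make both norms infinite. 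The only mild point requiring care is confirming that $H_{a_r}$ is well defined with all of its poles strictly in the left half-plane, which is immediate from the Galerkin structure together with the positive definiteness of $M$ and $D$.
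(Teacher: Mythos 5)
Your proposal is correct and follows essentially the same route as the paper: the paper's justification (given in the paragraph preceding the corollary) is exactly the cancellation $H-H_r=\phi_0/s+H_a-(\phi_{0_r}/s+H_{a_r})=H_a-H_{a_r}$ using the residue match from Theorem \ref{Residue} and the asymptotic stability of the remaining parts. Your added remarks on strict properness and the integrability of the Frobenius norm along $\imath\IR$ merely spell out details the paper leaves implicit.
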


\subsection{Algorithmic Implications} \label{sec:alg}
Theorem \ref{Residue} and Corollary \ref{cor:error} hint at how to construct $V$ so that the error system is asymptotically stable at a parameter value of interest. As stated in Section \ref{intmodred}, for the parameter samples $\p^{(i)}$ for $i=1,\ldots,m$, we will construct the local bases $V^{(i)}$ via SOR-IRKA to have local $\mathcal{H}_2$ optimality. However, we will modify SOR-IRKA by taking into consideration that $H(s,\p)$ has a pole at zero for every $\p$, i.e.,
$H(s,p)$ is not an $\mathcal{H}_2$ function. 
{SOR-IRKA} is an iterative algorithm that corrects the interpolation points in every step. Due to the pole at zero, SOR-IRKA will drive one of the interpolation points to zero as it should so that
the pole and residue at zero are matched. This will require computing the vector $\mathcal{K}(0,\p^{(i)})^{-1}Bb_0^{(i)}$.  However, due to the pole at zero, $\mathcal{K}(0,\p^{(i)})$ is not invertible. Therefore, inspired by Theorem \ref{Residue}, in SOR-IRKA, we will  replace this vector with the zero eigenvector of $L(\p^{(i)})$ and thus the span of $V^{(i)}$ will contain this eigenvector. Hence, 
once the global basis  $V$ is constructed as in (\ref{parametric reduction bases}), Theorem \ref{Residue} will guarantee that the error system $H(s,\p)-H_r(s,\p)$ is asymptotically stable for the \emph{sampled parameter values} $\p^{(i)}$ for $i=1,\ldots,m$. 

To use $H_r(s,\p)$ for an unsampled parameter value {$\hat{\p}$} and to still guarantee bounded error, 
we compute $\hat{\v} = \hat{\P}^{-1}\mathbf{1}$, construct the new basis $\widehat{V} =  \begin{bmatrix}
          V & \hat{\v}
    \end{bmatrix}$,
and obtain $H_r(s,\hat{\p})$ as 
in (\ref{parametric reduced matrices}), now using $\widehat{V}$. 
Theorem \ref{Residue} will then guarantee a bounded error at 
$\hat{\p}$ as well. 

The reduction step (\ref{parametric reduced matrices})  does
not need to be applied from scratch for every new $\hat{\p}$. For the new basis $\widehat{V}$, consider $\widehat{M}_r$ : $ \widehat{M}_r = \widehat{V}^T M \widehat{V}
     = 
     \begin{bmatrix}
          V^T M V & V^{T} M\hat{\v} \\ \hat{\v}{^{T}}MV & \hat{\v}{^{T}}M \hat{\v}
     \end{bmatrix}.$
The terms  
$V^T M V$, $V^{T}M$ and $MV$ are calculated only once in the offline stage using $V$, and only the vector $M \hat{v}$ needs computing for a  new parameter
$\hat \p$. The situation is similar for the other reduced quantities except for $\widehat{L}_r(\p)$ due to the nonaffine parametrization of $L(\p) = \P L \P$. An affine parametric approximation of $L(\p)$ to allow efficient online computations, via DEIM, for example,  \cite{SerkanSurvey}, will be studied in a future work.

\subsection{Smaller number of parameters}
Now we assume that $L(\p)$ is parametrized with a smaller number of parameters.  Let $\p = [\p_1~\p_2~\cdots~\p_\nu]^T \in \Omega_\nu \subseteq \IR^\nu$ and consider the parametrization
\begin{align} \label{nuparam}
L(\p) = \P L \P~~\mbox{with}~~\P = \mathsf{diag}(\p_1 I_{n_1}, \ldots, \p_\nu I_{n_\nu}),
\end{align}
where $n_1+ \cdots + n_\nu = n$
and $\nu < n$. This can be viewed as some of the peak voltage magnitudes $E_i$  varying together. This structure will drastically simplify the algorithmic considerations from
Section \ref{sec:alg}. 
In (\ref{nuparam}) we can also set some $p_i$'s to $1$ to allow variations only in a subset set $E_i$'s.
{
\begin{proposition} \label{prop:smallparam}
Consider the parametrization in (\ref{nuparam}). Let $\mathbf{0}_q \in \IR^q$ denote the zero vector and define
\begin{equation}
e_k = \begin{bmatrix}
\mathbf{0}^T_{n_1+\cdots+n_{k-1}} & \mathbf{1}^T_{n_k} & \mathbf{0}^T_{n_{k+1}+\cdots+n_\nu}
\end{bmatrix}^T \in \IR^n
\end{equation}
 for $k=1,2,\ldots,\nu$. If 
$\{e_1,e_2,\ldots,e_{\nu}\} \in \mathsf{span}(V)$, then 
$H_r(s,\p)$ retains the simple pole at zero and its corresponding parameter-dependent residue of $H(s,\p)$ for every $\p \in \Omega_\nu$. 
\end{proposition}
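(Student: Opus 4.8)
The plan is to derive Proposition~\ref{prop:smallparam} directly from Theorem~\ref{Residue}. The only thing that needs to be shown is that the single, \emph{parameter-independent}, $\nu$-dimensional subspace $\mathsf{span}\{e_1,\ldots,e_\nu\}$ simultaneously contains the zero eigenvector $\hat\v = \hat\P^{-1}\mathbf{1}$ of $L(\hat\p)$ for \emph{every} admissible $\hat\p \in \Omega_\nu$. Once that is established, the conclusion for each fixed $\hat\p$ is exactly the statement of Theorem~\ref{Residue}, and since $\hat\p$ is arbitrary the result holds over all of $\Omega_\nu$.

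First I would exploit the block-diagonal structure of $\P$ in (\ref{nuparam}). Since $\P = \mathsf{diag}(\p_1 I_{n_1},\ldots,\p_\nu I_{n_\nu})$ is invertible on $\Omega_\nu$, its inverse is $\P^{-1} = \mathsf{diag}(\p_1^{-1} I_{n_1},\ldots,\p_\nu^{-1} I_{n_\nu})$. Writing $\mathbf{1} \in \IR^n$ in terms of the block-indicator vectors, $\mathbf{1} = \sum_{k=1}^{\nu} e_k$, and applying $\P^{-1}$ blockwise, I get $\hat\v = \hat\P^{-1}\mathbf{1} = \sum_{k=1}^{\nu} \hat\p_k^{-1} e_k$. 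Thus $\hat\v$ is a linear combination of $e_1,\ldots,e_\nu$ in which the parameter enters only through the scalar coefficients $\hat\p_k^{-1}$; in particular $\hat\v \in \mathsf{span}\{e_1,\ldots,e_\nu\}$ for every $\hat\p \in \Omega_\nu$. By the hypothesis $\{e_1,\ldots,e_\nu\} \subseteq \mathsf{span}(V)$, this gives $\hat\v \in \mathsf{span}(V)$ uniformly in $\hat\p$.

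With $\hat\v \in \mathsf{span}(V)$ in hand I would then simply invoke Theorem~\ref{Residue} at the parameter $\hat\p$: it guarantees that $H_r(s,\hat\p)$ retains the simple pole of $H(s,\hat\p)$ at zero together with its (parameter-dependent) residue. Since $\hat\p \in \Omega_\nu$ was arbitrary, this holds for all $\p \in \Omega_\nu$, which is the claim. There is essentially no obstacle: the entire content of the proposition is the elementary observation that the block structure of $\P$ confines $\P^{-1}\mathbf{1}$ to the fixed subspace $\mathsf{span}\{e_1,\ldots,e_\nu\}$, so that a single enrichment of $V$ by the $\nu$ vectors $e_k$ covers the whole parameter domain at once (in contrast to Section~\ref{sec:alg}, where a new vector $\hat\v$ had to be appended for each unsampled parameter). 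The one routine point I would note in passing is that the $e_k$ have pairwise disjoint supports and are therefore linearly independent, which is consistent with the ``simple pole'' conclusion of Theorem~\ref{Residue}; I do not expect this to require separate argument since it is already subsumed by applying that theorem to $\hat\v$.
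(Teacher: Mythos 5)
Your proof is correct and follows exactly the paper's own argument: write $\hat\v = \hat\P^{-1}\mathbf{1} = \sum_{k=1}^{\nu}\hat\p_k^{-1}e_k$, conclude $\hat\v\in\mathsf{span}(V)$ for every $\hat\p\in\Omega_\nu$, and invoke Theorem~\ref{Residue}. No differences worth noting.
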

\begin{proof}
For any  $\hat{\p} \in \Omega_\nu$, 
${\hat{\v}} = \begin{bmatrix}
      \frac{1}{\p_1} \mathbf{1_{n_1}} & \cdots & \frac{1}{\p_{\nu}} \mathbf{1_{n_\nu}}
\end{bmatrix}^T$ is 
the eigenvector of $L(\hat{\p})$
corresponding to the zero eigenvalue. Note that
$ {\hat{\v}}=\frac{1}{\p_1}e_1 + \cdots + \frac{1}{\p_\nu}e_\nu
$. Therefore, if $\{e_1,e_2,\ldots,e_{\nu}\} \in \mathsf{span}(V)$, we have 
$\hat{v}\in \mathsf{span}(V)$ for \emph{every} $\hat{\p} \in \Omega_\nu$ and the desired result follows from
Theorem \ref{Residue}.
\end{proof}
}
Proposition \ref{prop:smallparam} reveals that in the case of the parametrization  (\ref{nuparam}), adding $\nu$ vectors to the span of $V$ will be enough to match the residue at $s=0$ for \emph{every} $\p \in \Omega_\nu$. Therefore, augmenting the global basis by a new vector for a given $\hat{\p}$ as explained in Section \ref{sec:alg} is no longer necessary. A \emph{fixed}  global  basis $V$ satisfying $\{e_1,e_2,\ldots,e_{\nu}\} \in \mathsf{span}(V)$ does the job for every $\p \in \Omega_\nu$. Note that one needs $\nu$ to be modest so that the reduced dimension stays modest. 

\subsubsection{Algorithmic details for implementing Proposition \ref{prop:smallparam}} \label{sec:propalg}
The global basis $V$ in Proposition \ref{prop:smallparam} can result from any model reduction method of choice. As long as the vectors
$\{e_1,\ldots,e_\nu\}$ are added to its span, the result will hold. We will form $V$ 
as in (\ref{parametric reduction bases})
where the local bases  result from the modified implementation of 
SOR-IRKA as described in Section \ref{sec:alg}.  
Given the parameter samples $\p^{(i)}$ for $i=1,\ldots,m$, let $\v^{(i)}$ 
denote the eigenvector of 
$L(\p^{(i)})$ corresponding to the zero eigenvalue. Our SOR-IRKA implementation will provide that $\{\v^{(1)},\dots,\v^{(m)}\} \in \mathsf{span}(V)$. As shown in the proof of Proposition \ref{prop:smallparam},
for any $\hat\p\in \Omega_\nu$, $\hat \v = \hat\P^{-1} \mathbf{1}$ is spanned by $\nu$ vectors. We will choose $m \geq \nu$ different parameter samples, obtaining a linearly independent set 
$\{\v^{(1)},\dots,\v^{(m)}\}$. Since these vectors are 
in the span of $V$, we will automatically satisfy the subspace condition in Proposition \ref{prop:smallparam}. Therefore, our construction of $V$ via modified SOR-IRKA with $m \geq \nu$ parameter samples will  guarantee bounded $\mathcal{H}_2$ and $\mathcal{H}_\infty$ error for every $\p \in \Omega_\nu$ without explicitly adding the vectors $\{e_1,\ldots,e_\nu\}$ to the model reduction basis $V$.

\section{Numerical results} \label{sec:num}
We use a linearized model of $2736$-bus Polish network\cite{zimmerman2016matpower} with  $n = 2736$. We focus on a single-input single-output model with 
$B = C^T = [1~~0~~\cdots~~0]^T \in {\rm I\!R}^{n \times 1} $
and allow $15\%$ variation in  peak voltage magnitudes, i.e., $0.85 \leq \p_i \leq 1.15$ in $L(\p)$. Recall that $p_i=1$ corresponds to the non-parametric unity voltage magnitude case ($E_i=1$). 

\subsection{Case 1: two parameters} \label{ex2}
We consider a parametrization with $\nu=2$ parameters $\p_1$ and $\p_2$ as
$ \P = \text{diag}(\p_1I_{\frac{n}{2}},\p_2I_{\frac{n}{2}})$. We pick two random samples, namely $\p^{(1)} = [0.9572~0.93399]^T$ and $\p^{(2)} = [1.0304~0.9522]^T$, and apply the modified SOR-IRKA to obtain local bases
${V^{(1)}} \in \IR^{n\times 20}$ and ${V^{(2)}}\in \IR^{n\times 20}$. An orthogonalization of $[V^{(1)}~V^{(2)}]$ leads to the global basis $V \in \IR^{n \times 40}$, thus a reduced model $H_r(s,\p)$ with $r=40$. Due to Proposition \ref{prop:smallparam} and the discussion in
Section \ref{sec:propalg}, $H_r(s,\p)$ matches the residue at $s=0$ and provides bounded   $\mathcal{H}_2$ and  $\mathcal{H}_\infty$ error throughout the whole domain  $[\p_1,~\p_2] \in \Omega_2 = [0.85 \ 1.15] \times [0.85 \ 1.15]$. 
{To illustrate the accuracy of $H_r(s,\p)$,   in Figure \ref{fig:Serkan_Case2_npis20_Nis2500} we show the  relative $\mathcal{H}_\infty$ error over the full parameter space.}  As  the figure illustrates, the structure-preserving reduced model $H_r(s,\p)$ is a high fidelity approximation to $H(s,\p)$ over the full parameter space  with a maximum relative error less than {$1.5 \times 10^{-2}$}. 
 \begin{figure}
  \includegraphics[width=0.75\linewidth]{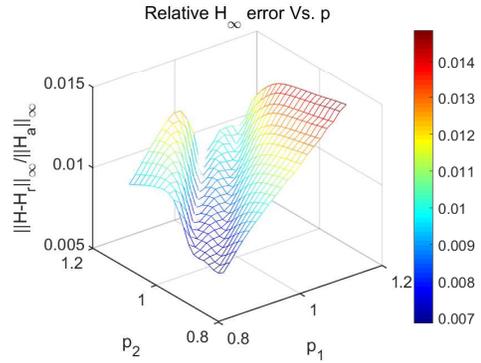}
  \centering
  \caption{Example \ref{ex2}: Relative $\mathcal{H}_\infty$ error over the parameter domain }
  \label{fig:Serkan_Case2_npis20_Nis2500}
\end{figure}
\subsection{Case 2: four parameters} \label{ex3}
In this example, we consider paremetrization via four parameters $\p_1$, $\p_2$, $\p_3$ and $\p_4$ to generate the matrix $\P$ such that  $\P = \text{diag}(\p_1I_{\frac{n}{4}},\p_2I_{\frac{n}{4}},\p_3I_{\frac{n}{4}},\p_4I_{\frac{n}{4}})$.
 We randomly pick  four parameter sample sets: 
 \begin{center}
 \begin{tabular}{|c|| c c c c |} 
 \hline
  Sample set & $\p_1$ & $\p_2$ & $\p_3$ & $\p_4$ \\
  \hline
 $\p^{(1)}$ & 1.0967& 0.8541  & 0.9399  &  0.887  \\
 $\p^{(2)}$ & 0.9399  &  0.9146 &  1.0377  & 1.0459 \\
 $\p^{(3)}$ & 0.9522  & 1.0713 & 0.9399   & 0.9572  \\
 $\p^{(4)}$ & 1.0801  & 0.9399 &  1.0377 & 1.1029 \\
  \hline
 \end{tabular}
 \end{center}
  Then using these samples, we apply the modified SOR-IRKA to obtain the local bases $V^{(i)}\in \IR^{n\times 20};~ i=\{1,2,3,4\}$ and  a parametric reduced model of order $r = 80 $  ($V\in \IR^{n\times 80})$. As in the previous example, 
 this reduced model guarantees bounded error over the whole parameter space. 
  To show the approximation quality,
 we  pick $200$ random samples in the four-dimensional parameter space, and depict the resulting relative $\mathcal{H}_\infty$ error  in Figure \ref{fig:Case3_RelError_r80_200samples},
 showing a maximum relative error less than $10^{-2}$ over this sample set. 
 \begin{figure}
  \includegraphics[width=0.75\linewidth]{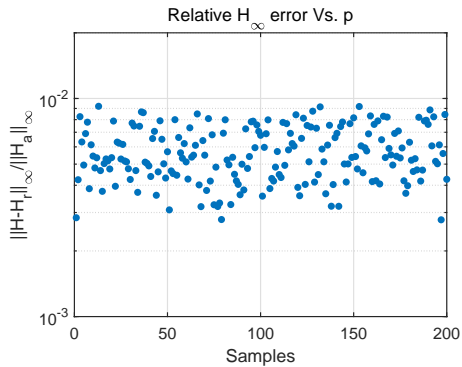}
  \centering
  \caption{Example \ref{ex3}: Relative $\mathcal{H}_\infty$ error over 200 samples.}
  \label{fig:Case3_RelError_r80_200samples}
\end{figure}

\section{Conclusions and Future Work}  \label{sec:conc}
We have developed a structure-preserving parametric model reduction approach for linearized swing equations using a global basis approach and $\mathcal{H}_2$-based interpolatory model reduction. We 
have established the subspace conditions for the model reduction basis so that the error system is an $\mathcal{H}_2$ and $\mathcal{H}_\infty$ function over the entire parameter space. The efficiency of our proposed approach has been illustrated via two numerical examples.

Parameter sampling for constructing the local bases was not the focus of this work. Any efficient parameter selection methodology can be incorporated into our framework and will be considered in a future together with the recent composite $\mathcal{H}_2\times \mathcal{L}_2$-optimal basis constructions
\cite{morHunMS18,grimm2018parametric}. Extensions to the nonlinear parametric setting is also an important topic to consider. \\[-3ex]

\section*{Acknowledgements} 
{We thank Dr.  Vassilis Kekatos and Dr. Siddharth Bhela
 for various discussions and for providing the 2736-bus Polish network model.} 

\bibliographystyle{plain}
\bibliography{ACC2020}

\end{document}